  \pgfplotsset{plot coordinates/math parser=false}
  \newlength\figureheight
  \newlength\figurewidth
\newtheorem{lemma}{Lemma}
\newcommand{\vbar}{\raisebox{.17ex}{\rule{.04em}{1.35ex}}}
\newcommand{\vbarind}{\raisebox{.01ex}{\rule{.04em}{1.1ex}}}
\newcommand{\R}{\ifmmode{\rm I}\hspace{-.2em}{\rm R} \else ${\rm I}\hspace{-.2em}{\rm R}$ \fi}
\newcommand{\T}{\ifmmode{\rm I}\hspace{-.2em}{\rm T} \else ${\rm I}\hspace{-.2em}{\rm T}$ \fi}
\newcommand{\N}{\ifmmode{\rm I}\hspace{-.2em}{\rm N} \else \mbox{${\rm I}\hspace{-.2em}{\rm N}$} \fi}
\newcommand{\B}{\ifmmode{\rm I}\hspace{-.2em}{\rm B} \else \mbox{${\rm I}\hspace{-.2em}{\rm B}$} \fi}
\newcommand{\Hil}{\ifmmode{\rm I}\hspace{-.2em}{\rm H} \else \mbox{${\rm I}\hspace{-.2em}{\rm H}$} \fi}
\newcommand{\C}{\ifmmode\hspace{.2em}\vbar\hspace{-.31em}{\rm C} \else \mbox{$\hspace{.2em}\vbar\hspace{-.31em}{\rm C}$} \fi}
\newcommand{\Cind}{\ifmmode\hspace{.2em}\vbarind\hspace{-.25em}{\rm C} \else \mbox{$\hspace{.2em}\vbarind\hspace{-.25em}{\rm C}$} \fi}
\newcommand{\Q}{\ifmmode\hspace{.2em}\vbar\hspace{-.31em}{\rm Q} \else \mbox{$\hspace{.2em}\vbar\hspace{-.31em}{\rm Q}$} \fi}
\newcommand{\Z}{\ifmmode{\rm Z}\hspace{-.28em}{\rm Z} \else ${\rm Z}\hspace{-.28em}{\rm Z}$ \fi}
\DeclareAcronym{AWGN}{
    short = AWGN,
    long = additive white Gaussian noise,
    list = Additive White Gaussian Noise,
    tag = abbrev
}
\DeclareAcronym{ADMM}{
    short = ADMM,
    long = alternating direction method of multipliers,
    list = Alternating Direction Method of Multipliers,
    tag = abbrev
}
\DeclareAcronym{MGMC}{
    short = MGMC,
    long = multi-group multi-casting,
    list = multi-group multi-casting,
    tag = abbrev
}
\DeclareAcronym{SGMC}{
    short = SGMC,
    long = single-group multi-casting,
    list = single-group multi-casting,
    tag = abbrev
}
\DeclareAcronym{AoA}{
    short = AoA,
    long = angle-of-arrival,
    list = Angle-of-Arrival,
    tag = abbrev
}
\DeclareAcronym{AoD}{
    short = AoD,
    long = angle-of-departure,
    list = Angle-of-Departure,
    tag = abbrev
}
\DeclareAcronym{KKT}{
    short = KKT,
    long = Karush-Kuhn-Tucker,
    list = Karush-Kuhn-Tucker,
    tag = abbrev
}
\DeclareAcronym{MMF}{
    short = MMF,
    long = max-min-fairness,
    list = max-min-fairness,
    tag = abbrev
}
\DeclareAcronym{WMMF}{
    short = WMMF,
    long = weighted max-min-fairness,
    list = max-min-fairness,
    tag = abbrev
}
\DeclareAcronym{BB}{
    short = BB,
    long = base band,
    list = Base Band,
    tag = abbrev
}
\DeclareAcronym{BC}{
    short = BC,
    long = broadcast channel,
    list = Broadcast Channel,
    tag = abbrev
}
\DeclareAcronym{BS}{
    short = BS,
    long = base station,
    list = Base Station,
    tag = abbrev
}
\DeclareAcronym{BR}{
    short = BR,
    long = best response,
    list = Best Response, 
    tag = abbrev
}
\DeclareAcronym{CB}{
    short = CB,
    long = coordinated beamforming,
    list = Coordinated Beamforming,
    tag = abbrev
}
\DeclareAcronym{CC}{
    short = CC,
    long = coded caching,
    list = Coded Caching,
    tag = abbrev
}
\DeclareAcronym{CE}{
    short = CE,
    long = channel estimation,
    list = Channel Estimation,
    tag = abbrev
}
\DeclareAcronym{CoMP}{
    short = CoMP,
    long = coordinated multi-point transmission,
    list = Coordinated Multi-Point Transmission,
    tag = abbrev
}
\DeclareAcronym{CRAN}{
    short = C-RAN,
    long = cloud radio access network,
    list = Cloud Radio Access Network,
    tag = abbrev
}
\DeclareAcronym{CSE}{
    short = CSE,
    long = channel specific estimation,
    list = Channel Specific Estimation,
    tag = abbrev
}
\DeclareAcronym{CSI}{
    short = CSI,
    long = channel state information,
    list = Channel State Information,
    tag = abbrev
}
\DeclareAcronym{CSIT}{
    short = CSIT,
    long = channel state information at the transmitter,
    list = Channel State Information at the Transmitter,
    tag = abbrev
}
\DeclareAcronym{CU}{
    short = CU,
    long = central unit,
    list = Central Unit,
    tag = abbrev
}
\DeclareAcronym{D2D}{
    short = D2D,
    long = device-to-device,
    list = Device-to-Device,
    tag = abbrev
}
\DeclareAcronym{DE-ADMM}{
    short = DE-ADMM,
    long = direct estimation with alternating direction method of multipliers,
    list = Direct Estimation with Alternating Direction Method of Multipliers,
    tag = abbrev
}
\DeclareAcronym{DE-BR}{
    short = DE-BR,
    long = direct estimation with best response,
    list = Direct Estimation with Best Response,
    tag = abbrev
}
\DeclareAcronym{DE-SG}{
    short = DE-SG,
    long = direct estimation with stochastic gradient,
    list = Direct Estimation with Stochastic Gradient,
    tag = abbrev
}
\DeclareAcronym{DFT}{
	short = DFT,
	long = discrete fourier transform,
	list = Discrete Fourier Transform,
	tag = abbrev
}
\DeclareAcronym{DoF}{
    short = DoF,
    long = degrees of freedom,
    list = Degrees of Freedom,
    tag = abbrev
}
\DeclareAcronym{DL}{
    short = DL,
    long = downlink,
    list = Downlink,
    tag = abbrev
}
\DeclareAcronym{GD}{
	short = GD, 
	long = gradient descent,
	list = Gradeitn Descent,
	tag = abbrev
}
\DeclareAcronym{IBC}{
    short = IBC,
    long = interfering broadcast channel,
    list = Interfering Broadcast Channel,
    tag = abbrev
}
\DeclareAcronym{i.i.d.}{
    short = i.i.d.,
    long = independent and identically distributed,
    list = Independent and Identically Distributed,
    tag = abbrev
}
\DeclareAcronym{JP}{
    short = JP,
    long = joint processing,
    list = Joint Processing,
    tag = abbrev
}
\DeclareAcronym{LOS}{
	short = LOS,
	long = line-of-sight,
	list = Line-of-Sight,
	tag = abbrev
}
\DeclareAcronym{LS}{
    short = LS,
    long = least squares,
    list = Least Squares,
    tag = abbrev
}
\DeclareAcronym{LTE}{
    short = LTE,
    long = Long Term Evolution,
    tag = abbrev
}
\DeclareAcronym{LTE-A}{
    short = LTE-A,
    long = Long Term Evolution Advanced,
    tag = abbrev
}
\DeclareAcronym{MIMO}{
    short = MIMO,
    long = multiple-input multiple-output,
    list = Multiple-Input Multiple-Output,
    tag = abbrev
}
\DeclareAcronym{MISO}{
    short = MISO,
    long = multiple-input single-output,
    list = Multiple-Input Single-Output,
    tag = abbrev
}
\DeclareAcronym{MAC}{
    short = MAC,
    long = multiple access channel,
    list = Multiple Access Channel,
    tag = abbrev
}
\DeclareAcronym{MSE}{
    short = MSE,
    long = mean-squared error,
    list = Mean-Squared Error,
    tag = abbrev
}
\DeclareAcronym{MMSE}{
    short = MMSE,
    long = minimum mean-squared error,
    list = Minimum Mean-Squared Error,
    tag = abbrev
}
\DeclareAcronym{mmWave}{
	short = mmWave,
	long = millimeter wave,
	list = Millimeter Wave,
	tag = abbrev
}
\DeclareAcronym{MU-MIMO}{
    short = MU-MIMO,
    long = multi-user \ac{MIMO},
    list = Multi-User \ac{MIMO},
    tag = abbrev
}
\DeclareAcronym{OTA}{
    short = OTA,
    long = over-the-air,
    list = Over-the-Air,
    tag = abbrev
}
\DeclareAcronym{PSD}{
    short = PSD,
    long = positive semidefinite,
    list = Positive Semidefinite,
    tag = abbrev
}
\DeclareAcronym{QoS}{
	short = QoS,
	long = quality of service,
	list = Quality of Service,
	tag = abbrev
}
\DeclareAcronym{RCP}{
	short = RCP,
	long = remote central processor,
	list = Remote Central Processor,
	tag = abbrev
}
\DeclareAcronym{RRH}{
    short = RRH,
    long = remote radio head,
    list = Remote Radio Head,
    tag = abbrev
}
\DeclareAcronym{RSSI}{
    short = RSSI,
    long = received signal strength indicator,
    list = Received Signal Strength Indicator,
    tag = abbrev
}
\DeclareAcronym{RX}{
	short = RX,
	long = receiver,
	list = Receiver,
	tag = abbrev
}
\DeclareAcronym{SCA}{
    short = SCA,
    long = successive-convex-approximation,
    list = Successive-Convex-Approximation,
    tag = abbrev
}
\DeclareAcronym{SG}{
    short = SG,
    long = stochastic gradient,
    list = Stochastic Gradient,
    tag = abbrev
}
\DeclareAcronym{SIC}{
    short = SIC,
    long = successive interference cancellation,
    list = Successive Interference Cancellation,
    tag = abbrev
}
\DeclareAcronym{SNR}{
    short = SNR,
    long = signal-to-noise-ratio,
    list = Signal-to-Noise Ratio,
    tag = abbrev
}
\DeclareAcronym{SDR}{
    short = SDR,
    long = semi-definite-relaxation,
    list = semi-definite-relaxation,
    tag = abbrev
}
\DeclareAcronym{SINR}{
    short = SINR,
    long = signal-to-interference-plus-noise ratio,
    list = Signal-to-Interference-plus-Noise Ratio,
    tag = abbrev
}
\DeclareAcronym{SOCP}{
	short = SOCP, 
	long = second order cone program,
	list = Second Order Cone Program,
	tag = abbrev
}
\DeclareAcronym{SSE}{
    short = SSE,
    long = stream specific estimation,
    list = Stream Specific Estimation,
    tag = abbrev
}
\DeclareAcronym{SVD}{
	short = SVD,
	long = singular value decomposition,
	list = Singular Value Decomposition,
	tag = abbrev
}
\DeclareAcronym{TDD}{
	short = TDD,
	long = time division duplex,
	list = Time Division Duplex,
	tag = abbrev
}
\DeclareAcronym{TX}{
	short = TX,
	long = transmitter,
	list = Transmitter,
	tag = abbrev
}
\DeclareAcronym{UE}{
    short = UE,
    long = user equipment,
    list = User Equipment,
    tag = abbrev
}
\DeclareAcronym{UL}{
    short = UL,
    long = uplink,
    list = Uplink,
    tag = abbrev
}
\DeclareAcronym{ULA}{
	short = ULA,
	long = uniform linear array,
	list = Uniform Linear Array,
	tag = abbrev
}
\DeclareAcronym{UPA}{
    short = UPA,
    long = uniform planar array,
    list = Uniform Planar Array,
    tag = abbrev
}
\DeclareAcronym{WMMSE}{
    short = WMMSE,
    long = weighted minimum mean-squared error,
    list = Weighted Minimum Mean-Squared Error,
    tag = abbrev
}
\DeclareAcronym{WMSEMin}{
    short = WMSEMin,
    long = weighted sum \ac{MSE} minimization,
    list = Weighted sum \ac{MSE} Minimization,
    tag = abbrev
}
\DeclareAcronym{WBAN}{
	short = WBAN,
	long = wireless body area network,
	list = Wireless Body Area Network,
	tag = abbrev
}
\DeclareAcronym{WSRMax}{
    short = WSRMax,
    long = weighted sum rate maximization,
    list = Weighted Sum Rate Maximization,
    tag = abbrev
}
\newtheorem{thm}{Theorem}
\newtheorem{exmp}{Example}%[section]
\theoremstyle{definition}
\newtheorem{rem}{Remark}
\newcommand{\CF}[0]{{\mathcal{F}}}
\newcommand{\CZ}[0]{{\mathcal{Z}}}
\newcommand{\Ba}[0]{{\mathbf{a}}}
\newcommand{\Bk}[0]{{\mathbf{k}}}
\newcommand{\Bw}[0]{{\mathbf{w}}}
\newcommand{\Bx}[0]{{\mathbf{x}}}
\newcommand{\BA}[0]{{\mathbf{A}}}
\newcommand{\BV}[0]{{\mathbf{V}}}
\newcommand{\Brk}[0]{{\Bar{k}}}
\newcommand{\Brt}[0]{{\Bar{t}}}
\newcommand{\Sfp}[0]{{\mathsf{p}}}
\newcommand{\FillGray}[3]{\filldraw[gray!50](#3-1+0.1,#1-#2+0.1) rectangle (#3-0.1,#1-#2+1-0.1)}
\newcommand{\FillBlack}[3]{\filldraw[black!70](#3-1+0.1,#1-#2+0.1) rectangle (#3-0.1,#1-#2+1-0.1)}
\newcommand{\FillHatch}[3]{\fill[pattern=crosshatch, pattern color=black!65](#3-1,#1-#2)rectangle(#3,#1-#2+1)}
\NewDocumentCommand \vect { s o m }
 {
  \IfBooleanTF {#1}
   { \vectaux*{#3} }
   { \IfValueTF {#2} { \vectaux[#2]{#3} } { \vectaux{#3} } }
 }
\DeclarePairedDelimiterX \vectaux [1] {\lbrack} {\rbrack}
 { \, \dbacc_vect:n { #1 } \, }
\newcommand{\subparagraph}{}
\titlespacing\section{3pt}{6pt plus 4pt minus 2pt}{6pt plus 2pt minus 2pt}
\titlespacing\subsection{3pt}{4pt plus 4pt minus 2pt}{4pt plus 2pt minus 2pt}
\titlespacing\subsubsection{3pt}{3pt plus 4pt minus 2pt}{0pt plus 2pt minus 3pt}
\title{Coded Caching and Spatial Multiplexing Gain Trade-off in Dynamic MISO Networks}
\begin{document}

\author{\IEEEauthorblockN{Milad Abolpour, MohammadJavad Salehi, and Antti T\"olli} \\
\IEEEauthorblockA{
    Centre for Wireless Communications, University of Oulu, 90570 Oulu, Finland \\
    \textrm{E-mail: \{firstname.lastname\}@oulu.fi}
    \vspace{-15pt}}
\thanks{This work is supported by the Academy of Finland under grants no. 318927 (6Genesis Flagship), 319059 (CCCWEE), and 343586 (CAMAIDE), and by the Finnish Research Impact Foundation (Vaikuttavuuss\"a\"ati\"o) under the project 3D-WIDE.}
}

\maketitle

\begin{abstract}
The global caching gain of multi-antenna coded caching techniques can be also mostly achieved in dynamic network setups, where the cache contents of users are dictated by a central server, and each user can freely join or leave the network at any moment.  In the dynamic setup, users are assigned to a limited set of caching profiles and the non-uniformness in the number of users assigned to each profile is compensated during the delivery phase by either adding phantom users for multicasting or serving a subset of users with unicast transmissions. In this paper, we perform a thorough analysis and provide closed-form representations of the achievable degrees of freedom (DoF) in such hybrid schemes, and assess %ing the potential DoF loss and 
the inherent trade-off between the global caching and spatial multiplexing gains caused by either adding phantom users or serving parts of the data through unicasting.
\end{abstract}

\begin{IEEEkeywords}
coded caching; dynamic networks; multi-antenna communications
\end{IEEEkeywords}

\section{Introduction}

With the increasing volume and variety of multimedia content, wireless networks are constantly being challenged for supporting higher data rates and lower latency~\cite{rajatheva2020whiteshort}. Especially, new wireless immersive viewing application, as one of the expected key 6G drivers, will push network capabilities beyond what's achievable by current state of the art~\cite{mahmoodi2021non,salehi2022enhancing}. The work of Maddah-Ali and Niesen proposed coded caching as a method of leveraging cache content across the network to promote efficient delivery of such multimedia content~\cite{maddah2014fundamental}. Coded caching boosts the achievable rate by a multiplicative factor proportional to the cumulative cache capacity in the entire network
%in a single-stream downlink network with $K$ cache-enabled users who may prefetch data into their cache memory. 
through multicasting carefully designed codewords to diverse groups of users.
%is the point to assuring this speedup, as it allows each user to cancel out undesired components from the received signal using its cache content. 
Given the significance of multi-antenna communications in the evolution of next-generation networks~\cite{rajatheva2020whiteshort}, the cache-aided \ac{MISO} configuration was later investigated in~\cite{shariatpanahi2016multi,shariatpanahi2018physical}, revealing that
%to demonstrate that 
the same coded caching benefit could be accomplished alongside the spatial multiplexing gain.
%, while this cumulative gain is optimal using the underlying assumptions of uncoded cache placement and one-shot linear data transmission~\cite{lampiris2018resolving}.

Many later research works in the literature addressed critical scaling and performance challenges raised by single- and multi-stream coded caching mechanisms. For example, optimized beamformer design for improving the finite \ac{SNR} performance of \ac{MISO} systems was discussed in~\cite{tolli2017multi,salehi2019subpacketization}, while the subpacketization issue (i.e., the number of smaller parts each file should be split into) was studied in~\cite{lampiris2018adding,salehi2020lowcomplexity}. Moreover, adapting coded caching to location-dependent file request scenario was considered in~\cite{mahmoodi2021non}, and increasing the caching gain with receiver-side multiple antennas was studied in~\cite{salehi2021MIMO}.
%For MISO setups, authors in~\cite{tolli2017multi} deigned an optimum beamformer to manage the inter-stream interference at finite signal-to-noise ratio (SNR) regime which leads to achieving a higher communication rate compared to conventional zero-forcing. Furthermore, as highlighted in~\cite{yan2018placement}, in a single-stream scenario, making use of MAN-based coded caching to divide each file into smaller parts, i.e., \emph{subpacketization}, becomes quickly much complicated, and grows exponentially with the number of users. Implementing coded caching in a MISO architecture, on the other hand, dramatically lowers the subpacketization process while having no negative impact on degrees-of-freedom (DoF). Although this reduced subpacketization lowers the achievable rate slightly, particularly in the finite-SNR regime~\cite{salehi2019subpacketization}, it does not degrade the achievable DoF and significantly simplifies the optimized beamformer design~\cite{salehi2020lowcomplexity}.

Another critical impediment to implementing coded caching techniques is their reliance on prior knowledge of the number of users for determining the cache contents at each user, making the implementation challenging in dynamic networks where the users can freely join and leave the network. Although fully decentralized schemes are able to address this issue partially~\cite{maddah2015decentralized}, their achievable gain approaches the one of classical centralized schemes only when the number of users or files tends to infinity. Another approach is to use hybrid centralized/decentralized approaches that %\textcolor{blue}{avoid storing the contents of users' caches randomly, and} 
aim to compensate for network dynamicity with a controlled \ac{DoF} loss~\cite{jin2019new, salehi2021low}. Especially, for MISO links, the idea is to assign users to a limited number of caching profiles, resulting in a shared-cache setup as studied in~\cite{parrinello2019fundamental}.

In this work, we specifically consider the hybrid scheme in~\cite{salehi2021low} that compensates for the non-uniformness in the number of users assigned with each profile by either adding virtual, \emph{phantom} users or excluding a subset of excess users from multicast transmission and instead serving them with multi-user unicast beamforming  in an orthogonal resource block. In this scheme, the delivery phase comprises two consecutive steps that deliver parts of data through coded caching techniques and unicasting, thus enabling a trade-off between the global caching and spatial multiplexing gains. 
%that suits dynamic networks with a large transmitter-side spatial multiplexing gain. 
While the work in~\cite{salehi2021low} considers the finite-SNR communication regime and uses numerical simulations to show the small performance gap with the uniform, centralized case, it lacks a proper theoretical analysis of the achievable DoF.
In this regard, here we perform a thorough analysis and provide closed-form representations of the achievable DoF of the scheme in~\cite{salehi2021low}, clarifying the real DoF loss with respect to the uniformly distributed users case, and the inherent trade-off caused by adding phantom users or serving parts of data with orthogonal unicast transmissions. 
%instead of multicasting) and discussing the inherent trade-off between the global caching and spatial multiplexing gains in the considered scheme.

%In this work, we briefly review content placement and content delivery phases in~\cite{salehi2021low}, and investigate the performance of our hybrid decentralized/centralized dynamic coded caching in terms of DoF. We analyze the impacts of non-uniformity of users' distributions and the chosen parameter of the server for the length of caches on the DoF. In this regard, we obtain the DoF for all possibilities for users' distributions, and perform the simulations. Particularly, it is shown that an increment in users' distributions degrades the DoF. Furthermore, it is shown that if the choice of server for the length of caches is close to the maximum length of caches, the DoF of our dynamic scheme approaches the optimal value in~\cite{maddah2014fundamental} with a given loss which is proportional to the spatial multiplexing gain of the network. 

%It is observed that when the number of users that are assigned to each profile suffers from a high non-uniformity, a tradeoff will arise between the unicast and multicast transmission such that 
%We first briefly describe the considered hybrid coded caching scheme in Section~\ref{section:dynamic_cc_description}, and then, follow by the \ac{DoF} analysis and simulations results in Sections~\ref{section:DoFAnalysis} and~\ref{section:SIMResults}, respectively.

Throughout the paper, we use boldface lower- and upper-case letters to denote vectors and matrices, respectively. Sets are shown with calligraphic letters. 
%Similar to vectors, it is assumed that the order of members in a set is important and preserved.
$\BA[i,j]$ is the element at row $i$ and column $j$ of matrix $\BA$, and $\Ba[i]$ is the $i$-th elements of vector $\Ba$.
%and $\CA[i]$ are the $i$-th elements of vector $\Ba$ and set $\CA$, respectively.
$[K]$ represents the set $\{1,2,...,K \}$.
%, and $\CA \backslash \CB$ denotes the elements of set $\CA$ which are not in set $\CB$. Other notations are defined as they are used in the text.

\section{Coded Caching for Dynamic MISO Setups}
\label{section:dynamic_cc_description}
\subsection{System Model}
\label{section:sysmodel_inside}
As discussed in~\cite{salehi2021low}, we consider a \ac{MISO} setup where a multi-antenna transmitter serves a set of cache-enabled users. The transmitter can support a maximum spatial multiplexing gain of $\alpha$, and has access to a library $\CF$ of some equal-sized files. The cache memory at each user is large enough to store a portion $0<\gamma<1$ of the entire library. We assume $\gamma = \frac{\Brt}{P}$, where $\Brt$ and $P$ are natural numbers and $\mathrm{gcd}\left(\Brt,P \right)=1$. The users are free to join and leave the network at any moment. Upon joining the network, the user $k$ is assigned with a cache profile $\Sfp(k) \in [P]$, and its cache contents are updated following a \emph{content placement} algorithm. A graphical representation of the system model is presented in Figure~\ref{fig:systeml}.

%The dynamicity of the network depends on the number of users present during each time interval.
The dynamicity of the networks results in a varying number of users in the network during the time.
At given time intervals, the users present in the network reveal their requested files (from the library $\CF$) to the transmitter. The transmitter then builds and transmits a set of codewords following a \emph{content delivery} algorithm, such that after the transmissions are concluded, all the users can decode their requested contents. 
In this paper, we consider the delivery process in a specific time interval with $K$ total users in the network, and assume the same process is repeated in each interval. 
%We also assume that in the considered delivery process, the number of users is fixed to $K$, and these $K$ users request data from the transmitter. 
Following the general approach in the literature, we consider the total \ac{DoF}, defined as the average number of users served simultaneously during each interval, as the performance metric.

\begin{figure}[t]
    \centering
    \includegraphics[height=3cm,width=5cm]{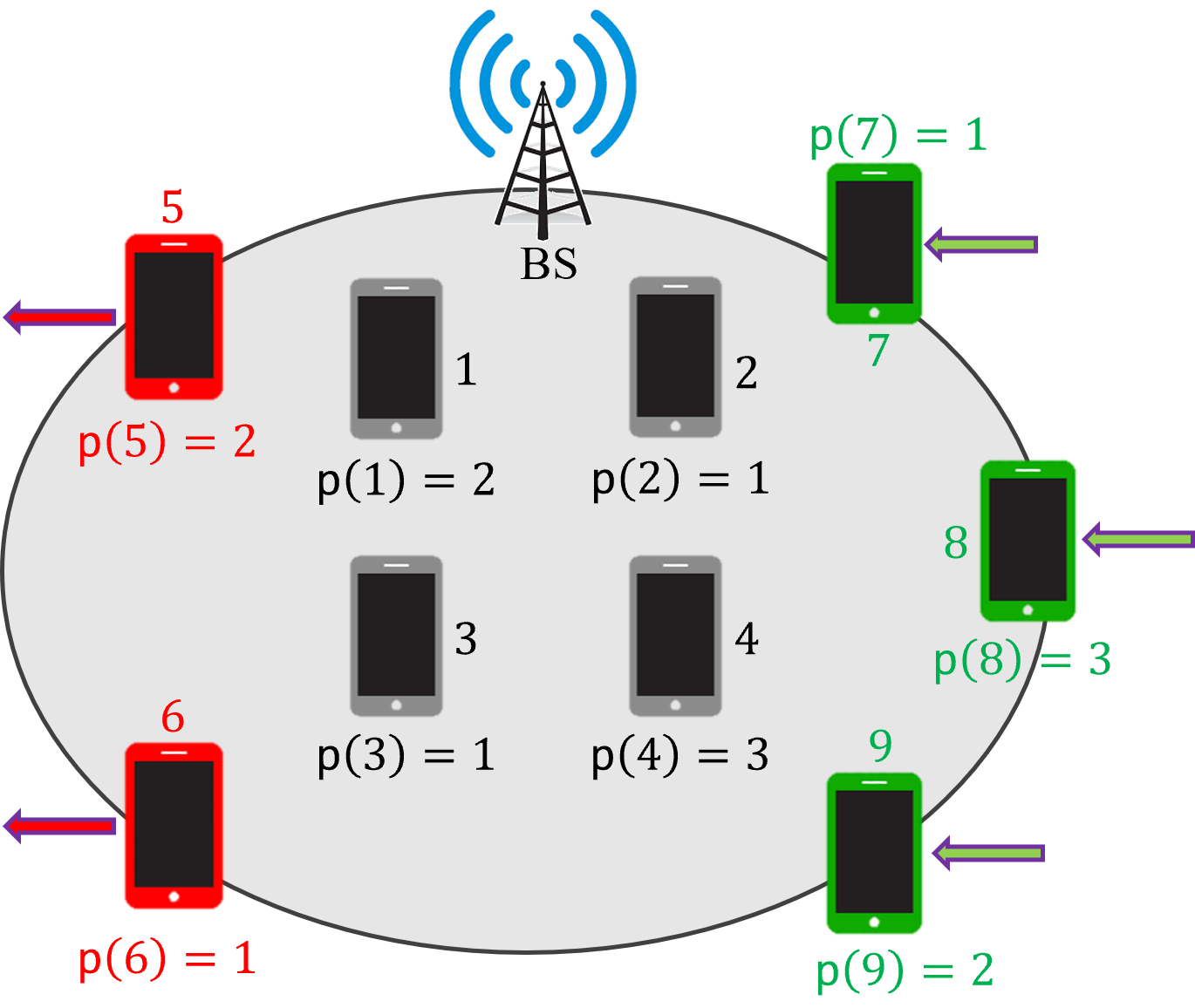}
    \caption{System model of the considered dynamic scheme at a sample time instant: Users $\left\lbrace 1,2,3,4 \right\rbrace$ are present and do not leave the network, users $\left\lbrace 7,8,9 \right\rbrace$ are joining, and users $\left\lbrace 5,6 \right\rbrace$ are leaving the network. Here, $\Sfp[k]$ is the profile that user $k$ is assigned to, e.g., user $1$ is assigned to profile $2$.}
    \label{fig:systeml}
\end{figure}

\subsection{Content Placement}
Upon joining the network, every user $k \in [K]$ is assigned with a caching profile $\Sfp(k)$, where $\Sfp(k) \in [P]$. Cache contents of user $k$ follow its assigned profile. Let us define a $P\times P$ binary \emph{placement matrix} $\BV$, for which: a) in the first row, the first $\overline{t}$ elements are one and the rest are zero, b) the row $p$, $1 < p \le P$, is a circular shift of its previous row to the right. Then, we split each file $W \in \CF$ into $P$ equal-sized \emph{packets} $W_p$, and if $\BV[p,\Sfp(k)] = 1$ for some row index $p$ and user index $k$, store $W_p$ in the cache memory of user $k$ for every file $W \in \CF$. Note that multiple users may be assigned with the same profile, and hence, have similar cache contents.

For the considered time interval, we assume the number of users assigned with profile $p$ is $\eta_p$, and $\sum_p \eta_p = K$. \footnote{During each time interval, depending on the number of users leaving or joining the network, $\eta_{p}$ and $K$ values vary.
%, such that every new user is randomly assigned to a profile.
} We call the parameter $\eta_p$ the \emph{length} of profile $p$. Cache placement is clarified more in Example~\ref{exmp:dynamic_cc_master}.

%---------------------------------------------------------------------------------------------%

\subsection{Content Delivery}
Following~\cite{salehi2021low}, content delivery for the considered dynamic network setup comprises two consecutive steps. Here, we first briefly review each step, and then, clarify the whole process in Example~\ref{exmp:dynamic_cc_master}.

\subsubsection{Coded Caching (CC) data delivery}
In this step, the server first selects a \emph{unifying profile length} parameter $\hat{\eta}$, and then, assuming $\hat{\eta}$ users are assigned to each profile, builds and transmits a set of codewords using a novel coded caching scheme. We discuss the effect of the parameter $\hat{\eta}$ on the \ac{DoF} in the next section. In order to assume a uniform profile assignment, for every profile $p$,
%\begin{itemize}
 %   \item if $\hat{\eta} < \eta_{p}$, we randomly select $\eta_{p} - \hat{\eta}$ users assigned with profile $p$, and exclude them from the CC data delivery step. These users are then served during the subsequent UC data delivery step;
  %  \item if $\hat{\eta}\geq \eta_{p}$, we add $\hat{\eta}-\eta_{p}$ \textit{phantom} users, which are imaginary, non-existent users, and assign them to profile $p$.
%\end{itemize}
\\$\bullet$ if $\hat{\eta} < \eta_{p}$, we randomly\footnote{In practice, we may get an improved performance by excluding users with poor channel conditions.} select $\eta_{p} - \hat{\eta}$ users assigned with profile $p$ and exclude them from the CC data delivery step. These users are then served during the subsequent step;
  %if $\hat{\eta}\geq \eta_{p}$, we add $\hat{\eta}-\eta_{p}$ \textit{phantom} users, which are imaginary, non-existent users, and assign them to profile $p$.
\\$\bullet$  if $\hat{\eta}\geq \eta_{p}$, we add $\hat{\eta}-\eta_{p}$ \textit{phantom} users, which are imaginary, non-existent users, and assign them to profile $p$. 

The codeword building process follows a similar approach to the RED scheme in~\cite{salehi2020lowcomplexity}. First, we consider a virtual MISO network with the spatial DoF $\Bar{\alpha}=\left\lceil \alpha / \hat{\eta} \right\rceil$ where each of the $P$  caching profiles is treated as a virtual user (thus resulting in the coded caching gain $\overline{t}=P\gamma$).
%, and the spatial DoF $\overline{\alpha}=\left\lceil \alpha / \hat{\eta} \right\rceil$. 
Then, we use the RED scheme in~\cite{salehi2020lowcomplexity} to create transmission vectors for the virtual network, resulting in $P$ transmission rounds each with $P-\Brt$ transmissions. We use $\bar{\Bx}_{j}^{r}$ to denote the virtual transmission vector $j$ at round $r$, and $\bar{\Bk}_{j}^{r}$ to represent the set of virtual users (profiles) served by $\bar{\Bx}_{j}^{r}$. Finally, we \emph{elevate} every virtual transmission vector $\bar{\Bx}_{j}^{r}$ and remove the effect of the phantom users to yield the transmission vectors for the original network.

Creating transmission vectors for the virtual network follows a cyclic delivery algorithm and is explained in detail in~\cite{salehi2020lowcomplexity}. We revisit this procedure in the next section while calculating the \ac{DoF}.
However, during the elevation process, we elevate each virtual transmission vector into $\hat{\eta}$ transmission vectors for the original network\footnote{In~\cite{salehi2021low}, it is suggested that if $\frac{\alpha}{\hat{\eta}}$ is an integer, each virtual vector can be elevated into only one vector for the original network. This suggestion reduces the required subpacketization by a factor of $\hat{\eta}$ but complicates the DoF analysis.} by replacing each virtual user with either $\hat{\eta}$ or $b$ users, depending on the value of $b=\alpha- \hat{\eta}\left\lfloor \alpha/\hat{\eta} \right\rfloor$, and removing or suppressing the inter-stream interference for these users with spatial multiplexing. For every user $k$ replacing a virtual user $\Brk$ we have $\Sfp(k) = \Brk$, and hence, we use the term caching profile interchangeably with the virtual user index while discussing the elevation process. As discussed in~\cite{salehi2021low}, during elevation, we also need to further split each packet into $\rho = \hat{\eta} \Brt + \alpha$ smaller \emph{subpackets}.

\subsubsection{Unicast (UC) data delivery}
In this step, the server transmits all missing data excluded from the CC data delivery phase (i.e., the data requested by users excluded when $\hat{\eta} < \eta_p$ for some profile $p$). At this step, only the local caching gain is available together with the spatial multiplexing gain, and there is no global caching gain as in the CC delivery step.

Let us assume the number of users being served in this step is $K_U$. We split each file requested by these users into the same number of subpackets as in the CC delivery step. As the cache ratio at each user is $\gamma$ and the total subpacketization is $P\rho$, each user needs to receive (at most) $P\rho(1-\gamma)$ subpackets to decode its requested file. To deliver the missing subpackets, we use a greedy algorithm in which we: 1) sort the users by their number of missing subpackets, 2) build a transmission vector that delivers one missing subpacket to each of the first $\alpha$ users (or all the remaining users, if  $K_{U} < \alpha$) using spatial multiplexing, and 3) repeats the whole procedure until all the subpackets are delivered.

\begin{figure}[t]
     \centering
     \begin{subfigure}[t]{0.4 \columnwidth}
         \centering
     \includegraphics[scale=0.3]{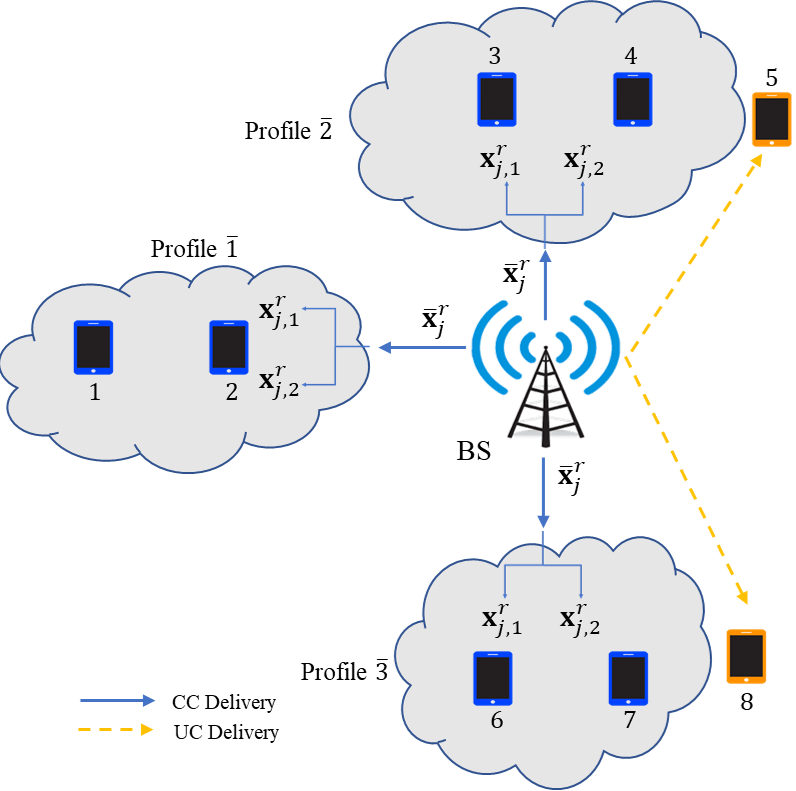}
 \caption{$\hat{\eta}=2$}
         \vspace{2em}
         \label{fig:eta2}
     \end{subfigure}
     \hfill
     \begin{subfigure}[t]{0.4 \columnwidth}
         \centering
 \includegraphics[scale=0.3]{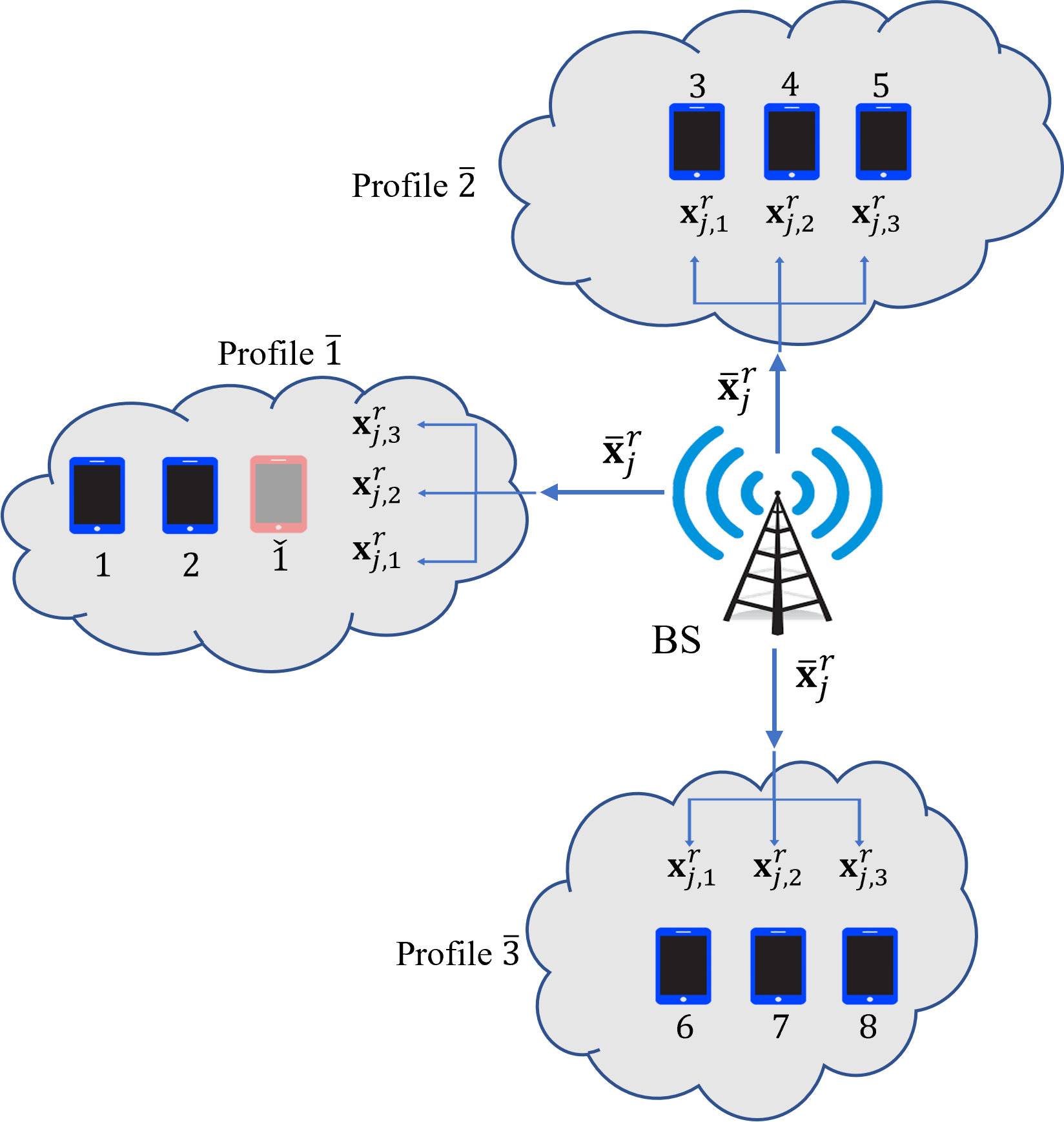}
         \caption{$\hat{\eta}=3$}
         \label{fig:eta3}
     \end{subfigure}
     \vspace{-2em}
     \caption{Content delivery phase for the network in Example~\ref{exmp:dynamic_cc_master}, with $\hat{\eta}=2$ and $\hat{\eta}=3$.
    % for users 1-8 such that users 1-2, 3-5 and 6-8 are assigned to the first, second and third profile, respectively.
     Blue- and orange-colored icons represent users served in CC and UC delivery steps, respectively. Also, the phantom user is distinguished with the light red color. 
     %the serving users at CC delivery, excluded users, and phantom users, respectively.
     %When $\hat{\eta}=2$, the length of each cache is $2$ and users 1-4 and 6-7 are served at CC delivery while users 5 and 8 are served at UC delivery. For $\hat{\eta}=3$, the phantom user $\check{1}$ is added to profile $1$ to make the length of each profile equal to 3. Here, there is not any excluded users, and all users are served during the CC data delivery.
     }
     \label{fig:exmaple_dynamic_master}
\end{figure}

\begin{exmp}
\label{exmp:dynamic_cc_master}
Consider a  MISO network with $\alpha = 4$ and $\gamma = \frac{1}{3}$ (i.e., $P=3$ and $\Brt=1$). The placement matrix $\BV$ and the cache contents of the users assigned with the first profile (shown by $\CZ(1)$) are given as
\begin{equation*}
    \BV = 
    \renewcommand\arraystretch{0.6}\begin{bmatrix}
   1 & 0 & 0 \\
    0 & 1 & 0 \\
    0 & 0 & 1
    \end{bmatrix},
    \quad
    \CZ(1) = \{W_1 \; | \; W \in \CF \}.
\end{equation*}
Let us assume that in the considered time instant, eight users are in the network, and users $\{1,2\}$, $\{3,4,5\}$, and $\{6,7,8\}$ have been assigned with the first, second, and third profiles, respectively. For this network, we briefly review the delivery process for $\hat{\eta} = 2,3$. As $\bar{\alpha}=\left\lceil \alpha/  \hat{\eta} \right\rceil$, for both cases we have $\bar{\alpha}=2$. Then, the virtual network will have $P=3$ virtual users, coded caching gain $\Brt =1$, and spatial multiplexing gain $\bar{\alpha} = 2$. Using the RED scheme in~\cite{salehi2020lowcomplexity}, the packet index vector for the first transmission in the first round for this virtual network is $\bar{\Bk}_{1}^{1}=\vect{\bar{1},\bar{2},\bar{3}}$,
%\begin{equation*}
%    \bar{\Bk}_{1}^{1}=\vect{\bar{1},\bar{2},\bar{3}}, \qquad \bar{\Bp}_{1}^{1}=\vect{\bar{2},\bar{1},\bar{1}},
%\end{equation*}
and the respective virtual transmission vector is
\begin{equation}
\label{eq: first_round_transmission vects exmp}
\begin{aligned}
&\bar{\Bx}_{1}^{1}=\bar{W}_{2}^{1}\left( \bar{1} \right)\bar{\Bw}_{\bar{3}}+\bar{W}_{1}^{1}\left( \bar{2} \right)\bar{\Bw}_{\bar{3}}+\bar{W}_{1}^{1}\left( \bar{3} \right)\bar{\Bw}_{\bar{2}},\\
%
%&\bar{\Bx}_{2}^{1}=\bar{W}_{3}^{1}\left( \bar{1} \right)\bar{\Bw}_{\bar{2}}+\bar{W}_{1}^{2}\left( \bar{3} \right)\bar{\Bw}_{\bar{2}}+\bar{W}_{1}^{2}\left( \bar{2} \right)\bar{\Bw}_{\bar{3}}.
\end{aligned}
\end{equation}
where $\bar{W}\left( \Brk \right)$ denotes the virtual file requested by virtual user $\Brk$, superscripts show the subpacket index, and $\Bar{\Bw}_{\Brk}$ is the optimized beamformer suppressing its associated data term at virtual user (profile) $\Brk$.

Now, if $\hat{\eta} = 2$, we don't need to add any phantom users but should exclude two real  users assigned with the second and third profiles. Let us assume users 5 and 8 are excluded. 
%Also, as $b=0$, the total subpacketization is $P \rho = 9$ and 
Then, the first transmission vector resulting from elevating $\bar{\Bx}_1^1$ is 
%elevated into exactly one transmission vector $\Bx_1^1$ as follows:
\begin{equation}
\label{eq:trans_eta2}
        \begin{aligned}
            \Bx_{1,1}^{1}&=W_{2}^{1}(1)\Bw_{6,7,2}+W_{2}^{1}(2)\Bw_{6,7,1}+W_{1}^{1}(3)\Bw_{6,7,4}\\
            &+W_{1}^{1}(4)\Bw_{6,7,3}+W_{1}^{1}(6)\Bw_{3,4,7}+W_{1}^{1}(7)\Bw_{3,4,6}.
        \end{aligned}
\end{equation}
The excluded users 5 and 8 are then served in the UC step, which requires $P \left(\hat{\eta}\bar{t}+\alpha \right) (1-\gamma) =12$ transmissions each delivering one subpacket in parallel to both users 5 and 8.

However, if $\hat{\eta} = 3$, we need to add one phantom user and assign it to the first profile, but no user is excluded from the CC delivery step (and hence, there is no UC step). Let us assume the phantom user $\check{1}$ is assigned to the first profile. Then, the first transmission vector resulting from the elevation of $\bar{\Bx}_1^1$ is
%as $b=1$, the virtual transmission vector $\bar{\Bx}_1^1$ is elevated into $\hat{\eta}=3$ transmission vectors $\Bx_{1,\delta}^1$, $\delta \in [3]$, and, for example, $\Bx_{1,1}^1$ is built as
\begin{equation}
    \label{eq: sub_vectors b1 exmp}
        \begin{aligned}
            \Bx_{1,1}^{1}&=W_{2}^{1}(1)\Bw_{6,2,\check{1}}+W_{2}^{1}(2)\Bw_{6,1,\check{1}}+W_{2}^{1}(\check{1})\Bw_{6,1,2}\\
            &+W_{1}^{1}(3)\Bw_{6,4,5}+W_{1}^{1}(4)\Bw_{6,3,5}+W_{1}^{1}(5)\Bw_{6,3,4}\\
            &+W_{1}^{1}(6)\Bw_{3,4,5},\\
        \end{aligned}
\end{equation}
which, after removing the effect of the phantom user $\check{1}$ is equivalent to
\begin{equation}
\label{eq:trans_eta3_nophantom}
        \begin{aligned}
            \Bx_{1,1}^{1}&=W_{2}^{1}(1)\Bw_{6,2}+W_{2}^{1}(2)\Bw_{6,1}
            +W_{1}^{1}(3)\Bw_{6,4,5}\\
            &+W_{1}^{1}(4)\Bw_{6,3,5}+W_{1}^{1}(5)\Bw_{6,3,4}+W_{1}^{1}(6)\Bw_{3,4,5}.
        \end{aligned}
\end{equation}

Let us consider the decoding process at user one, after the transmission of $\Bx_{1,1}^1$ in~\eqref{eq:trans_eta3_nophantom}. Following the cache placement process, interference terms $W_{1}^{1}(3)$, $W_{1}^{1}(4)$, $W_{1}^{1}(5)$, $W_{1}^{1}(6)$ are all available in the cache memory of user one and could be removed from the received signal. On the other hand, $W_{2}^{1}(2)$ is also suppressed at user one with beamforming, and hence, this user can decode $W_{2}^{1}(1)$ interference-free. 

The transmission process for both cases of $\hat{\eta} = 2,3$ is depicted in Figure~\ref{fig:exmaple_dynamic_master}. Comparing equations~\eqref{eq:trans_eta2}-\eqref{eq:trans_eta3_nophantom}, we can see that without phantom users, the DoF value of the CC step is fixed and limited, but we need to deliver part of the data in the UC step (which hurts the total DoF). On the other hand, by increasing $\hat{\eta}$, a fewer number of users are served in the UC step, but the DoF in the CC step varies in each transmission.

\end{exmp}

\section{DoF Analysis}
\label{section:DoFAnalysis}
As mentioned in Section~\ref{section:sysmodel_inside}, we define DoF as the average number of users served simultaneously  during the whole content delivery phase. However, as every target user is receiving exactly one subpacket during each transmission, we may also equivalently calculate the average number of transmitted subpackets instead. In this regard, defining $J_{M}$ and $J_{U}$ to be the total number of transmitted subpackets, and $T_{M}$ and $T_{U}$ to denote the total number of transmissions at CC and UC delivery steps, respectively, the DoF is calculated as
\begin{equation}
\label{eq: DoF def}
    \text{DoF}=\frac{J_{M}+J_{U}}{T_{M}+T_{U}}.
\end{equation}
Let us also define $K_{M}$ and $K_{U}$ to be the number of users being served in CC and UC data delivery steps, respectively. In other words, given the $\hat{\eta}$ value, we have $K_M = \sum_p \min (\hat{\eta}, \eta_p)$ and $K_U = \sum_p \max (\eta_p-\hat{\eta},0)$.

In order to compute the DoF, we need to know the number of times each virtual user (or equivalently, each profile) is served during the CC data delivery step. Lemma~\ref{lemma: count users} provides us with that.
\begin{lemma}
\label{lemma: count users}
Each virtual user appears $\left( P-\bar{t} \right)$ times as the $m$-th element, $m\in \left[ \bar{t}+\bar{\alpha} \right]$, in all user index sets $\bar{\Bk}_{j}^{r}$, $r \in [P]$ and $j \in [P-\bar{t}]$. 
\end{lemma}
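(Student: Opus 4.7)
The plan is to exploit the cyclic structure of the RED scheme applied to the virtual network. Recall that the RED construction of~\cite{salehi2020lowcomplexity} is built cyclically: the $P$ rounds $r\in[P]$ are generated by successive cyclic shifts of the virtual user index, and within each round the $P-\bar t$ transmissions enumerate all ``active'' subsets of a fixed size in a cyclically consistent way. So I would first isolate this cyclic invariance as the central structural fact, then use it to reduce the claim to a single counting identity.

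First, I would set up notation: each ordered tuple $\bar{\Bk}_j^r$ has length $\bar t+\bar\alpha$ (its first $\bar t$ entries are the cache-suppressed ``caching'' users and the last $\bar\alpha$ entries are the spatially multiplexed users, in the ordering used by RED). Let $N_m(\bar k)$ denote the number of pairs $(r,j)\in[P]\times[P-\bar t]$ such that the $m$-th coordinate of $\bar{\Bk}_j^r$ equals $\bar k$. The goal is to show $N_m(\bar k)=P-\bar t$ for every position index $m\in[\bar t+\bar\alpha]$ and every virtual user $\bar k\in[P]$.

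Next, I would invoke the cyclic shift property. The RED construction is equivariant under the cyclic permutation $\sigma:\bar k\mapsto \bar k+1 \pmod P$ on the virtual user labels: applying $\sigma$ entrywise to every tuple $\bar{\Bk}_j^r$ in round $r$ yields exactly the collection of tuples in round $r+1$ (with indices reduced mod $P$). Consequently the map $(r,j)\mapsto(r+1,j)$ witnesses $N_m(\bar k)=N_m(\sigma(\bar k))$, and iterating gives that $N_m(\bar k)$ is independent of $\bar k$. Thus for each fixed position $m$, all $P$ virtual users appear equally often in that slot across the $P(P-\bar t)$ tuples.

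Finally, a counting argument closes the lemma: for each fixed $m\in[\bar t+\bar\alpha]$ the total number of entries in position $m$ over all tuples equals $P(P-\bar t)$, and by the equidistribution just established this splits equally among the $P$ virtual users, giving $N_m(\bar k)=P-\bar t$. The main obstacle I anticipate is pinning down the cyclic equivariance cleanly without reproducing the full RED algorithm; I would handle this by citing the cyclic delivery structure of~\cite{salehi2020lowcomplexity} and only re-deriving the shift identity $\bar{\Bk}_j^{r+1}=\sigma(\bar{\Bk}_j^r)$ at the level needed for the counting argument.
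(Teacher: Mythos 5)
Your argument is correct and rests on the same structural fact as the paper's own proof, namely the cyclic equivariance $\bar{\Bk}_{j}^{r+1}=\left( \bar{\Bk}_{j}^{r}+1 \right) \bmod{P}$ of the RED index sets; the two proofs differ only in how the counting is finished. The paper fixes $j$ and tracks a single virtual user $p$ through the rounds, arguing that it occupies each of the $\bar{t}+\bar{\alpha}$ positions exactly once as $r$ ranges over $[P]$, and then multiplies by the $P-\bar{t}$ values of $j$. You instead fix the position $m$, use the shift symmetry to conclude that $N_m(\bar{k})$ is independent of $\bar{k}$, and divide the total of $P(P-\bar{t})$ entries in slot $m$ equally among the $P$ virtual users. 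Your packaging is arguably cleaner: it avoids having to identify the exact position occupied by user $p$ in round $r$ (a step the paper states somewhat loosely), and the averaging step is immune to off-by-one bookkeeping. The one point you should make explicit is that the equivariance you invoke must hold \emph{coordinatewise} --- the $m$-th coordinate of $\bar{\Bk}_{j}^{r+1}$ must equal the $m$-th coordinate of $\bar{\Bk}_{j}^{r}$ plus one modulo $P$, for the same $j$ --- and not merely at the level of the unordered set of served profiles; otherwise the position-resolved count $N_m$ need not be preserved under the map $(r,j)\mapsto(r+1,j)$. This stronger statement does follow from the explicit formula for $\bar{\Bk}_{j}^{r}$ quoted in the paper (both the first $\bar{t}$ entries and the remaining $\bar{\alpha}$ entries advance by one when $r$ is incremented with $j$ held fixed), so this is a matter of exposition rather than a genuine gap.
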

\begin{proof}
From~\cite{salehi2020lowcomplexity}, we know that for every round $r \in \left[ P \right]$ and transmission index $j \in \left[ P-\bar{t} \right]$, $\bar{\Bk}_{j}^{r}$ is calculated as

\begin{equation*}
\label{eq: k bar def}
    \begin{aligned}
    \bar{\Bk}_{j}^{r} = &\Big[  \big[ r:\left( r+ \bar{t}-1\right) \bmod{P} \big];  \\
   &  \Big( \big( \left( \left[ r: r+\bar{\alpha}  \right] +j-1 \right) \bmod{\left( P-\bar{t} \right)}\big) +\bar{t} \Big) \bmod{P} \Big].
    \end{aligned}
\end{equation*}
Let us consider a fixed round $r=p$. Clearly, the first element of $\bar{\Bk}_{j}^{p}$ is $p$, and as $\bar{\Bk}_{j}^{p+1}=\left( \bar{\Bk}_{j}^{p}+1 \right) \bmod{P}$, the second element of $\bar{\Bk}_{j}^{p+1}$ is also $p$. Similarly, as for a fixed $j$ each $\bar{\Bk}_{j}^{r}$ is a circular shift of $\bar{\Bk}_{j}^{p}$, the first element of $\bar{\Bk}_{j}^{p}$ is the $(\left( r-p \right)\bmod{P}+1)$-th element of $\bar{\Bk}_{j}^{r}$. As a result, for a fixed $j$, the first element of $\bar{\Bk}_{j}^{p}$ equally places as the $r$-th element of $\bar{\Bk}_{j}^{r}$ during all the $P$ rounds of transmissions. On the other hand, since $j$ varies from one to $\left( P-\bar{t} \right)$, the number of times that the first element of $\bar{\Bk}_{j}^{p}$ (and equivalently each virtual user)  appears as the  $m$-th element of  all user index sets $\bar{\Bk}_{j}^{r}$ is equal to $\left( P-\bar{t} \right)$.
\end{proof}
%
%The following theorem characterizes the DoF for the coded caching scheme for dynamic networks in~\cite{salehi2021low}. 

\begin{thm}
\label{Theorem: DoF}
The DoF of the proposed scheme in a network with spatial multiplexing gain $\alpha$, cache ratio $\gamma$, and the unifying profile length parameter $\hat{\eta}$ is equal to 
%\begin{itemize}
%\item $b=0$:
%\begin{equation}
%\label{eq: DoF b0}
%\mathrm{DoF}=
%    \begin{cases}
%   K_{M}\gamma+\frac{K_{M}\alpha}{P\hat{\eta}} & K_{U}=0 \\
   %
%       \frac{K_{M} \left( P-\bar{t} \right)\left( \bar{t}+\bar{\alpha} \right) +K_{U}\left( P-P\gamma \right)\left( \bar{t}+\bar{\alpha} \right)}{P\left( P -\bar{t} \right)+ \left( P-P\gamma \right)\left( \bar{t}+\bar{\alpha} \right)} & K_{U} < \alpha \\
   %
%    \frac{K_{M} \left( P-\bar{t} \right) \left( \bar{t}+\bar{\alpha} \right)+K_{U}\left( P-P\gamma \right)\left( \bar{t}+\bar{\alpha} \right)}{P\left( P -\bar{t} \right)+\left\lceil \frac{K_{U}\left( P-P\gamma \right)\left( \bar{t}+\bar{\alpha} \right)}{\alpha} \right\rceil} & K_{U} \geq \alpha 
%    \end{cases}
%\end{equation}
%%%%
%\item $b\neq0$:
\begin{equation}
\label{eq: DoF bn0}
\mathrm{DoF}=
    \begin{cases}
   K_{M}\gamma+\frac{K_{M}\alpha}{P\hat{\eta}} & K_{U}=0 \\
       \frac{K_{M} \left( P-\bar{t} \right)\left( \hat{\eta}\bar{t}+\alpha \right) +K_{U}\left( P-P\gamma \right)\left( \hat{\eta}\bar{t}+\alpha \right)}{P\left( P -\bar{t} \right)\hat{\eta}+ \left( P-P\gamma \right)\left( \hat{\eta}\bar{t}+\alpha \right)} & K_{U} < \alpha \\
    \frac{K_{M} \left( P-\bar{t} \right)\left( \hat{\eta}\bar{t}+\alpha \right) +K_{U}\left( P-P\gamma \right)\left( \hat{\eta}\bar{t}+\alpha \right)}{P\left( P -\bar{t} \right)\hat{\eta}+ \left\lceil\frac{ K_{U} \left( P-P\gamma \right)\left( \hat{\eta}\bar{t}+\alpha \right)} {\alpha}\right\rceil} &  K_{U} \geq \alpha
    \end{cases}
\end{equation}
%\end{itemize}
\end{thm}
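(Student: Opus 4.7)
The plan is to compute each of the four quantities appearing in~\eqref{eq: DoF def}, namely $J_M$, $J_U$, $T_M$, and $T_U$, separately and then combine them. Since the total subpacketization is $P\rho$ with $\rho = \hat{\eta}\bar{t} + \alpha$, each real user in either delivery step must receive exactly $P\rho(1-\gamma) = (P-\bar{t})(\hat{\eta}\bar{t}+\alpha)$ subpackets in order to decode its requested file. Summing over served users immediately gives $J_M = K_M(P-\bar{t})(\hat{\eta}\bar{t}+\alpha)$ and $J_U = K_U(P-\bar{t})(\hat{\eta}\bar{t}+\alpha)$, which together supply the numerator appearing in~\eqref{eq: DoF bn0}.

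For $T_M$, I would use the structure of the RED scheme on the virtual network: it comprises $P$ rounds of $P-\bar{t}$ transmissions each, producing $P(P-\bar{t})$ virtual transmission vectors, every one of which is elevated into $\hat{\eta}$ real transmission vectors. This yields $T_M = \hat{\eta}P(P-\bar{t})$, matching the first summand of the denominator in~\eqref{eq: DoF bn0}. The correctness of the scheme (i.e., that these $T_M$ transmissions indeed deliver all $J_M$ needed subpackets) follows from Lemma~\ref{lemma: count users}: since every virtual user occupies each slot position $m \in [\bar{t} + \bar{\alpha}]$ exactly $P-\bar{t}$ times, the elevation distributes the missing subpackets evenly across the real users assigned to each profile.

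For $T_U$, I would split into the three cases of the theorem. If $K_U = 0$ the UC step is absent, so $T_U = 0$ and~\eqref{eq: DoF def} simplifies to $K_M(\hat{\eta}\bar{t}+\alpha)/(P\hat{\eta}) = K_M\gamma + K_M\alpha/(P\hat{\eta})$, the first branch of~\eqref{eq: DoF bn0}. If $0 < K_U < \alpha$, each UC transmission can multiplex all remaining users in parallel, so $T_U$ equals the per-user demand $(P-\bar{t})(\hat{\eta}\bar{t}+\alpha) = (P-P\gamma)(\hat{\eta}\bar{t}+\alpha)$. If $K_U \geq \alpha$, each UC transmission can serve at most $\alpha$ distinct subpackets, giving the trivial lower bound $T_U \geq \lceil J_U/\alpha \rceil$; I would then argue that the greedy sort-and-serve policy described in Section~\ref{section:dynamic_cc_description} attains this bound, since distinct UC users request disjoint subpackets and all have equal demand, so one can always select $\alpha$ users with missing data until at most a single, possibly partially loaded transmission remains.

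Substituting the four quantities into~\eqref{eq: DoF def} then yields the three branches of~\eqref{eq: DoF bn0}. The main obstacle I expect is formalizing the greedy-attainability argument in the $K_U \geq \alpha$ case, i.e., showing that no UC slot is wasted until the very last transmission; the remaining steps are essentially bookkeeping built on Lemma~\ref{lemma: count users} and the delivery construction of Section~\ref{section:dynamic_cc_description}.
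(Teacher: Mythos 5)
Your proposal is correct and follows the paper's overall skeleton exactly: compute $J_M$, $J_U$, $T_M$, $T_U$ and substitute into~\eqref{eq: DoF def}, with identical values for $T_M$, $J_U$, and $T_U$ (your case split on $K_U$ is just the $\min(K_U,\alpha)$ in the paper's single formula, and your remark about greedy attainability of $\lceil J_U/\alpha\rceil$ is a point the paper glosses over). The one genuine difference is the derivation of $J_M$. You count from the receiver side: every served user must obtain $P\rho(1-\gamma)=(P-\bar{t})(\hat{\eta}\bar{t}+\alpha)$ subpackets, so $J_M=K_M(P-\bar{t})(\hat{\eta}\bar{t}+\alpha)$. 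The paper instead counts from the transmitter side: by Lemma~\ref{lemma: count users} each profile occupies each slot position $(P-\bar{t})$ times, users in the first $\bar{t}+\bar{\alpha}-1$ positions are served by all $\hat{\eta}$ elevated vectors while those in the last position are served by only $b=\alpha-\hat{\eta}\lfloor\alpha/\hat{\eta}\rfloor$ of them, giving $J_M=K_M(P-\bar{t})\bigl(\hat{\eta}(\bar{t}+\bar{\alpha}-1)+b\bigr)$, which then simplifies to the same expression. Your route is shorter, but it implicitly assumes that the CC step delivers each missing subpacket exactly once and completely --- which is precisely what the paper's transmitter-side count (built on Lemma~\ref{lemma: count users} and the $\hat{\eta}$-versus-$b$ split of the elevation) verifies. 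You do invoke Lemma~\ref{lemma: count users} for this ``even distribution'' property, so this is not a gap, but if you want the argument to be self-contained you should make the reconciliation of the two counts explicit rather than treating exact, non-redundant delivery as given.
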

\begin{proof}
We explicitly derive $J_{M}$, $J_{U}$, $T_{M}$ and $T_{U}$,
%for all possibilities of $K_{U}$, 
and then, use the DoF definition in~\eqref{eq: DoF def}. In order to calculate $J_{M}$ and $T_{M}$ corresponding to the CC data delivery step, we recall that: 1) each  user in the virtual network corresponds to a caching profile in the original network; 
2) from Section~\ref{section:dynamic_cc_description}, every virtual transmission vector $\bar{\Bx}_{j}^{r}$, $j \in [P-\bar{t}], r \in [P]$,  is elevated into $\hat{\eta}$ transmission vectors $\Bx_{j,\delta}^{r}$, $\delta \in [\hat{\eta}]$, for the original network; 3) following the discussions in~\cite{salehi2021low}, the users connected to the first $\left( \bar{t}+\bar{\alpha}-1 \right)$ elements of $\bar{\Bk}_{j}^{r}$ are served with $\hat{\eta}$ transmission vectors $\Bx_{j,\delta}^{r}$, but the users assigned to the last element of $\bar{\Bk}_{j}^{r}$ are served only with $b$ transmission vectors $\Bx_{j,\delta}^{r}$ where $b$ is the remainder of the division of $\alpha$ by $\hat{\eta}$; 
%{are served during each of $\hat{\eta}$ sub-transmissions. However, only $b$ users that are connected to the last element of $\bar{\Bk}_{j}^{r}$ are served at each sub-transmission vector. Therefore, by considering all $\hat{\eta}$ sub-transmission vectors, we observe that each user connected to  the last element of $\bar{\Bk}_{j}^{r}$ appears $b$ times among entire sub-transmission vectors. As a result, during the transmission vector $j \in [P-\bar{t}]$  of round $r \in [P]$, any user connected to the $\left( \bar{t}+\bar{\alpha}-1 \right)$-first elements of $\bar{\Bk}_{j}^{r}$ is served $\hat{\eta}$ times over all sub-transmission vectors; otherwise, it is served $b$ times during all $\hat{\eta}$ sub-transmission vectors.}
4) using Lemma~\ref{lemma: count users}, every virtual user (i.e., profile index) appears $\left( P-\bar{t} \right)$ times as the $m$-th element of user index sets $\bar{\Bk}_{j}^{r}$, where $m \in [\Brt + \Bar{\alpha}]$. 
%Consequently, there exist $\left( P-\bar{t} \right)$ transmission vectors including an individual user while each transmission vector is split into $\hat{\eta}$ sub-transmission vectors. 
Now, since $K_{M}$ users participate in the CC delivery step, 
%receives a different subpacket at any sub-transmission vector, 
the total number of transmitted subpackets at this step is obtained as
\begin{equation}
\label{eq: JM bn0 def}
    J_{M}=K_{M} \left( P-\bar{t} \right)\big( \hat{\eta}\left( \bar{t}+\bar{\alpha}-1 \right)+b \big),
\end{equation}
which, after substituting $\bar{\alpha}=\left\lceil \alpha/\hat{\eta} \right\rceil$ and $b=\alpha- \hat{\eta}\left\lfloor \alpha/\hat{\eta} \right\rfloor$, is reduced to
%On the other hand, we know that $\bar{\alpha}=\left\lceil \frac{\alpha}{\hat{\eta}} \right\rceil$ and $b=\alpha- \hat{\eta}\left\lfloor \frac{\alpha}{\hat{\eta}} \right\rfloor$ which result in:
%\begin{equation}
%\label{eq: alpha def}
%    \begin{aligned}
%    \alpha=\hat{\eta}\left( \bar{\alpha}-1 \right)+b=\hat{\eta}\bar{\alpha} -\hat{\eta}+b.
%    \end{aligned}
%\end{equation}
%By  substituting \eqref{eq: alpha def} into \eqref{eq: JM bn0 def}, $J_{M}$ is reformulated as
\begin{equation}
\label{eq: JM bn0}
    J_{M}=K_{M} \left( P-\bar{t} \right) \left( \hat{\eta}\bar{t}+\alpha \right).
\end{equation}
In addition, as stated earlier, we have a total number of $P\left( P-\bar{t} \right)$ virtual transmission vectors, each elevated into $\hat{\eta}$ vectors for the original network. Hence, the total number of transmission in the CC delivery step is equal to
\begin{equation}
\label{eq: TM bn0}
    T_{M}=P\left( P-\bar{t} \right)\hat{\eta}.
\end{equation}

%In the proceeding, by using the parameter $\rho=\left( \hat{\eta}\bar{t}+\alpha \right)^{\left\lceil \frac{b}{\hat{\eta}} \right\rceil}  \left(\bar{t}+\bar{\alpha} \right)^{\left(1-\left\lceil \frac{b}{\hat{\eta}} \right\rceil\right)}$, we simultaneously calculate $J_{U}$ and $T_{U}$ for both cases $b=0$ and $b\neq 0$ such that when $b=0$, we have $\rho=\bar{t}+\bar{\alpha}$, and for $b\neq 0$, it is found that $\rho=\hat{\eta}\bar{t}+\alpha$.

Now, for the UC delivery step, considering that each of the $K_U$ users requires $P\rho(1-\gamma)$ subpackets, we  have
%As mentioned earlier,  $P\gamma \rho$ subpackets are stored at each of $K_{U}$ users at cache placement phase. Since the total number of subpackets that $K_{U}$ users need to decode their own files is equal to $K_{U}P\rho$, the number of transmitted subpackets at UC  is as follows. 
\begin{equation}
 \label{eq: JU b}
 J_{U}=K_{U}\left(P-P\gamma\right)\rho.
 %J_{U}=
  %   \begin{cases}
   %    K_{U}\left(P-P\gamma\right)\left( \bar{t}+\bar{\alpha} \right) & b=0 \\
    %   K_{U}\left(P-P\gamma\right)\left( \hat{\eta}\bar{t}+\alpha \right) & b\neq0
     %\end{cases}.
 \end{equation}
Moreover, as the spatial multiplexing gain is $\alpha$, the server is able to serve $\min \left( K_{U}, \alpha \right)$ users at any given time instant, and hence, the total number of required transmissions at the UC delivery step would be
\begin{equation}
\label{eq; TU b}
    T_{U}=\left\lceil \frac{K_{U}\left( P-P\gamma \right)\rho}{\min \left( K_{U}, \alpha \right)} \right\rceil,
    %\begin{cases}
    %  \left\lceil \frac{K_{U}\left( P-P\gamma \right)\left( \bar{t}+\bar{\alpha} \right)}{\min \left( K_{U}, \alpha \right)} \right\rceil & b=0 \\
     % \left\lceil \frac{K_{U}\left( P-P\gamma \right)\left( \hat{\eta}\bar{t}+\alpha \right)}{\min \left( K_{U}, \alpha \right)} \right\rceil & b\neq 0
    %\end{cases}.
\end{equation}
%Now, for both cases $b=0$ and $b \neq 0$ which result in $\rho=\bar{t}+\bar{\alpha}$ and $\rho=\hat{\eta}\bar{t}+\alpha$, respectively, we rewrite \eqref{eq: TU} as
%\begin{equation}
%\label{eq; TU b}
 %   T_{U}=
  %  \begin{cases}
   %   \left\lceil \frac{K_{U}\left( P-P\gamma \right)\left( \bar{t}+\bar{\alpha} \right)}{\min \left( K_{U}, \alpha \right)} \right\rceil & b=0 \\
   %   \left\lceil \frac{K_{U}\left( P-P\gamma \right)\left( \hat{\eta}\bar{t}+\alpha \right)}{\min \left( K_{U}, \alpha \right)} \right\rceil & b\neq 0
    %\end{cases}.
%\end{equation}
%\\
%$\bullet$ $b=0:$ Using $\bar{t}=P\gamma$ and $\bar{\alpha}=\frac{\alpha}{\hat{\eta}}$,  and substituting \eqref{eq: JM b0} and \eqref{eq: TM b0} into \eqref{eq: DoF def}, the DoF is characterized as
%\begin{equation}
 %   \begin{aligned}
  %  \mathrm{DoF}=\frac{K_{M}  \left(P-\bar{t} \right) \left( P\gamma+\bar{\alpha} \right)}{P\left( P-\bar{t} \right)}=
 %K_{M}\gamma+\frac{K_{M}\alpha}{P\hat{\eta}}.    
  %  \end{aligned}
%\end{equation}
%

Finally, the DoF value in Theorem~\ref{Theorem: DoF} can be calculated by substituting \eqref{eq: JM bn0}-\eqref{eq; TU b} into \eqref{eq: DoF def}.
\end{proof}

\begin{rem}
In case of full multicasting (i.e., $\hat{\eta} = \max \eta_p$, $K_{U}=0$), we have $K_M = K = P \eta_{\rm avg}$, and the DoF is %calculated by
\begin{equation}
    \mathrm{DoF} = K \gamma+\frac{K \alpha}{P\hat{\eta}} = K \gamma + \alpha \frac{\eta_{\rm avg}}{\hat{\eta}}.
\end{equation}
Comparing this DoF value with the two extreme cases of: 1) relying only on the spatial multiplexing gain, for which the DoF of $\alpha$ is achievable, and 2) uniform distribution of $\eta_p$ values and using coded caching techniques, for which the DoF of $K\gamma + \alpha$ is achievable, the DoF of the proposed scheme lies between the two extremes and enables an added caching gain of $K \gamma$ by incurring a loss in the spatial multiplexing gain by a multiplicative factor of $(1-\frac{\eta_{\rm avg}}{\hat{\eta}})$. The loss in the DoF is caused by phantom users, and is increased as the distribution of $\eta_p$ parameters becomes more non-uniform (i.e., if we need to add more phantom users).

%$K \gamma+\frac{K \alpha}{P\hat{\eta}}$ . In other words, the loss appears in the multiplexing gain, which is equal to
%\begin{equation}
%    \begin{aligned}
%    \mathrm{Loss}^{\rm DoF}&=\left( K_{M}\gamma+\alpha \right)-\left(K_{M}\gamma+\frac{K_{M}\alpha}{P\hat{\eta}} \right)\\
%    &=\alpha \left( 1- \frac{P \eta_{\rm avg}}{P\eta_{\rm max}} \right)
%    =\alpha \left( 1-\frac{\eta_{\rm avg}}{\eta_{\rm max}} \right),
%    \end{aligned}
%\end{equation}
%where $\mathrm{Loss}^{\rm DoF}$ is the gap between our dynamic cyclic coded caching and  the optimal DoF with MAN scheme \cite{maddah2014fundamental} which is equal to $K_{M}\gamma+\alpha$. Furthermore, $\eta_{\max}=\hat{\eta} = \max\limits_{p} \eta_p$ and $\eta_{\rm avg}$ denotes the average number of users that are connected to profiles such that $K_{M}=P\eta_{\rm avg}$.
\end{rem}

\section{Simulation Results}
\label{section:SIMResults}
%In this section, we present the  simulation results for our dynamic cyclic coded caching network. 
Here, we investigate the achievable DoF  during the content delivery phase in a dynamic network setup with the spatial multiplexing gain $\alpha=50$ and the cache ratio $\gamma=\frac{1}{4}$ (i.e., $P=4$ profiles and $\bar{t}=1$). We assume $K=100$ users are present in the network, and the profile length values $\eta_p$ vary from zero to $\max_{p\in [4]} \eta_{p}$.
%In Figures~\ref{fig: Dof 3D} and \ref{fig:DoFV},  for a given coded caching setup, we assume 100 users  are randomly distributed among 4 profiles, and for each users' distribution,  $\hat{\eta}$ varies from 0 to $\max\limits_{p\in [4]} \eta_{p}$.

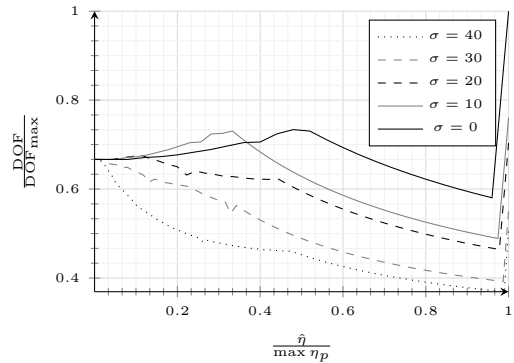
\begin{figure}[htb]
  \centering
        \begin{tikzpicture}

    \begin{axis}
    [
    width=0.8\columnwidth,
    height=0.6\columnwidth,
    % put axis lines at left and bottom
    axis lines = center,
    % control axis labels
    xlabel near ticks,
    xlabel = \smaller {\tiny $\frac{\hat{\eta}}{\max \eta_p}$},
    ylabel = \smaller {\tiny $\frac{\mathrm{DOF}}{\mathrm{DOF}_{\max}}$},
    ylabel near ticks,
    %ymin = 0, 
    %ymax = 45,
    % control legend position
    %legend style={at={(0.17,0.95)},anchor=north west},
    legend pos= north east,
    % control size of tick marks (10,20,30,etc)
    ticklabel style={font=\tiny},
    % control major grids
    grid=both,
    major grid style={line width=.2pt,draw=gray!30},
    % control minor grids
    grid style={line width=.1pt, draw=gray!10},
    minor tick num=5,
    ]
    
    %\addplot
    %[mark = none, dashed, black!40]
    %table[y=Bound-A3,x=SNR]{Data/SN1-3-Rate.tex};
    %\addlegendentry{\tiny Bound, $\alpha=3$}
    
    \addplot
    [dotted, black]
    table[y=DOF40,x=ETA40]{Figs/3D_Data.tex};
    \addlegendentry{\tiny $\sigma=40$}
    
    \addplot
    [dashed,gray]
    table[y=DOF30,x=ETA30]{Figs/3D_Data.tex};
    \addlegendentry{\tiny $\sigma=30$}
    
    \addplot
    [dashed,black]
    table[y=DOF20,x=ETA20]{Figs/3D_Data.tex};
    \addlegendentry{\tiny $\sigma=20$}
    
    \addplot
    [gray]
    table[y=DOF10,x=ETA10]{Figs/3D_Data.tex};
    \addlegendentry{\tiny $\sigma=10$}
    
    \addplot
    [black]
    table[y=DOF0,x=ETA0]{Figs/3D_Data.tex};
    \addlegendentry{\tiny $\sigma=0$}
    
    \end{axis}

    \end{tikzpicture}
    %}
    \caption{$\mathrm{DoF}/\mathrm{DoF}_{\max}$ versus $\hat{\eta}/\max \eta_{p}$ for different values of $\sigma$.}
    \label{fig: Dof 3D}
\end{figure}

Figure~\ref{fig: Dof 3D} illustrates how the (normalized) DoF varies with respect to $\hat{\eta}$, as the distribution of $\eta_p$ values becomes more non-uniform (i.e., the standard deviation of the $\eta_p$ distribution, shown by $\sigma$, is increased).
%for different values of the standard deviation of the $\eta_p$ distribution. and standard deviation  of the cache lengths $\sigma$. 
Here,
%\begin{equation*}
    $\sigma^2={\frac{1}{4}\sum_{p=1}^{4}\left( \eta_{p} -\eta_{\rm avg} \right)^{2}}$,
%\end{equation*}
$\eta_{\rm avg} = 25$, and $\mathrm{DoF}_{\max}$ is the largest achievable DoF (found by exhaustive search).
%$\sigma=\sqrt{\frac{1}{4}\sum_{p=1}^{4}\left( \eta_{p} -\eta_{\rm avg} \right)^{2}}$ and $\eta_{\rm avg}=25$. 
As observed, when  $\sigma$ is small, the maximum DoF can be easily achieved by setting $\hat{\eta} = \max_p \eta_p$.
%ratio $\mathrm{DoF}/\mathrm{DoF}_{\max}$ corresponds to the case of $\hat{\eta}=\max\eta_{p}$. 
In other words, 
when the users are distributed approximately uniformly among the profiles, a better DoF is achieved by involving more users in the CC delivery step. This is due to the fact that for near-uniform distributions of $\eta_p$ values, setting $\hat{\eta} = \max_p \eta_p$ does not require adding many phantom users, while selecting $\hat{\eta}$ less than $\max \eta_{p}$ would increase the number of excluded users, thus decreasing the coded caching gain and hurting the DoF. 
%For instance, when  $\sigma \approx 0$, i.e., there exist approximately 25 users connected to each of 4 profiles, the maximum ratio of $\mathrm{DoF}/\mathrm{DoF}_{\max}$ is achievable by setting $\hat{\eta}=\max\eta_{p}$ and serve all users in CC data delivery. 
%
However, for large values of $\sigma$ (e.g., $\sigma = 30$),  the maximum DoF can be achieved at $\hat{\eta}/\max\eta_{p} \approx 0$. which demonstrates that it is better to serve all users in the UC step to achieve a higher DoF. 
This is due to the fact that for very non-uniform distributions of $\eta_p$ values, selecting $\hat{\eta} > 0$ requires adding many phantom users and this hurts the DoF as their effects should be removed before the real transmissions.

It can also be observed from Figure~\ref{fig: Dof 3D} that the DoF is not a monotonic function of $\hat{\eta}$. This demonstrates that there exists a trade-off between the CC and UC delivery steps, which can be balanced by carefully choosing the $\hat{\eta}$ value.

\begin{figure}[htb]
    \centering
    \begin{tikzpicture}

\begin{axis}[%
width=0.8\columnwidth,
height=0.6\columnwidth,
axis lines = center,
xmin=0,
xlabel near ticks,
xlabel={\tiny Standard Deviation ($\sigma$)},
ymin=0.63,
ymax=1.02,
ylabel={\tiny ${\frac{\mathrm{DoF}_{M}}{\mathrm{DoF}_{{\mathrm{opt}}}}}$},
ylabel near ticks,
    % control major grids
    grid=both,
    major grid style={line width=.2pt,draw=gray!30},
    % control minor grids
    grid style={line width=.1pt, draw=gray!10},
    minor tick num=5,
    legend pos = north east,
ticklabel style={font=\tiny},
]
\addplot[only marks, mark=x, mark size=1.1pt, black!50] table[y=DynamicY,x=DynamicXSD]{Figs/DoFV_Data.tex};
\addlegendentry{\tiny The proposed scheme}

\addplot [gray,line width=1.0pt]
table[y=UCY,x=UCXSD]{Figs/DoFV_Data_2.tex};
\addlegendentry{\tiny Unicast-only}

\addplot [black, dashed, line width=1.0pt]
table[y=MCY,x=MCXSD]{Figs/DoFV_Data_2.tex};
\addlegendentry{\tiny Multicast-only}

\end{axis}
\end{tikzpicture}%
    \caption{The ratio  $\mathrm{DoF}_{M}/\mathrm{DoF}_{\rm opt}$ versus $\sigma$}
    \label{fig:DoFV}
\end{figure}

In Figure~\ref{fig:DoFV}, for every feasible distribution of $\eta_p$ values, we have compared the best achievable DoF (found by line search over the $\hat{\eta}$ value and shown by $\mathrm{DoF}_M$) with the DoF of the optimal, uniform distribution (denoted by \emph{multicast only} and shown by $\mathrm{DoF}_{\rm opt}$). For a better comparison, we have also included the achievable DoF when all the requested data is served using the UC delivery step only (i.e., without any coded caching gain).
%we make a line search over $\hat{\eta}$, and compute the DoF for all $\hat{\eta} \in [\max_{p}]$ to pick the one with the largest DoF which is denoted by $\mathrm{DoF}_{M}$.   
%Furthermore, 
%we know that the maximum DoF is associated with the case that all users are distributed uniformly among 4 profiles without any phantom and excluded users. Here, 
%as per \cite{maddah2014fundamental}, the optimum DoF ($\mathrm{DoF}_{\rm opt}$) is equal to $\mathrm{DoF}_{\rm opt}=K\gamma+\alpha$ for the case that all users are served during the CC data delivery. On the other hand, when all users are served during  UC delivery, the network only benefits from the multiplexing gain of $\alpha$ together with the local caching gain. 
%Our hybrid dynamic  coded caching benefits from both global caching gain and multiplexing gain, and provides  a DoF less between the the fully-multicast and fully-unicast transmissions (blue circles). 

As observed, the ratio $\mathrm{DoF}_{M}/\mathrm{DoF}_{\rm opt}$ decreases as $\sigma$ is increased. This shows that increasing non-uniformness in the distributions of $\eta_p$ values generally degrades the maximum achievable DoF with the proposed scheme. This was expected as both tools for compensating for the non-uniformness in $\eta_p$ distribution (i.e., adding phantom users and excluding users from the CC step) hurt the DoF. However, for moderate non-uniformness in $\eta_p$ values (e.g., $\sigma \le 10$), the proposed scheme enables a noticeable performance improvement (10-50 percent) over the full unicast method. The emphasizes the importance of employing various methods at the transmitter to make sure the distribution of $\eta_p$ values is as uniform as possible, while switching to simple unicast methods when the non-uniformness increases.

%In this sense, when $\sigma$ is small, e.g., $\sigma \approx 10$, our scheme benefits from  both global coded caching gain and spatial multiplexing gain. In fact, since we have some non-uniformness, we add some phantom users to each profile to make the users' distribution uniform which results in loosing some global caching gain. For large values of $\sigma$, e.g., 40, the DoF of our scheme tends to the fully-unicast scheme. In this case, the global caching gain, completely vanishes by applying many phantom users, and we only benefit from the local caching and  spatial multiplexing gain. 
%
%
\section{Conclusion and Future Work}
We analyzed the degrees-of-freedom (DoF) of a hybrid decentralized/centralized coded caching scheme for dynamic setups where the users could freely join or leave the network at any moment. It was illustrated that increasing the non-uniformness  in users' distributions decreases the achievable DoF. Accordingly,  in order to compensate for this non-uniformness, we either added some phantom users  or served a subset of users with unicast transmissions which raised an inherent trade-off  between the global caching and spatial multiplexing gains. In other words, for each users' distribution, we should find the best unifying profile length to maximize the DoF of serving users with either multicast or unicast transmissions. Managing the trade-off between global caching  and spatial multiplexing gain to find the optimal profile length is a subject of our future work. 
\bibliographystyle{IEEEtran}
\bibliography{references,ref_whitepapershort}

\end{document}